\newtheorem{lemma}{Lemma}
\newtheorem{theorem}{Theorem}
\newtheorem{corollary}{Corollary}
\theoremstyle{remark}
\newtheorem*{remark}{Remark}
\newcommand{\head}{{\rm head}}
\newcommand{\tail}{{\rm tail}}
\title{Reconfiguration of Time-Respecting Arborescences\thanks{This work was partially supported by 
JSPS KAKENHI Grant Numbers 
JP18H04091, 
JP19K11814, 
JP20H05793, 
JP20H05794, 
JP20H05795, 
JP20K11666, 
JP20K11692, 
JP20K19742, 
JP20K23323, 
JP22K17854, 
JP22K13956. 
}}
\author{Takehiro Ito\thanks{
Graduate School of Information Sciences, 
Tohoku University, Sendai, Japan.
\texttt{\{takehiro, akira\}@tohoku.ac.jp}}\and
Yuni Iwamasa\thanks{
Graduate School of Informatics, 
Kyoto University, Kyoto, Japan.
\texttt{iwamasa@i.kyoto-u.ac.jp} }\and
Naoyuki Kamiyama\thanks{
Institute of Mathematics for Industry, 
Kyushu University, Fukuoka, Japan.
\texttt{kamiyama@imi.kyushu-u.ac.jp}}\and 
Yasuaki Kobayashi\thanks{
Graduate School of Information Science and Technology, 	Hokkaido University, Sapporo, Japan.
\texttt{koba@ist.hokudai.ac.jp}}\and
Yusuke Kobayashi\thanks{
Research Institute for Mathematical Sciences, 
Kyoto University, Kyoto, Japan.
\texttt{yusuke@kurims.kyoto-u.ac.jp}}\and
Shun-ichi Maezawa\thanks{
Department of Mathematics, 
Tokyo University of Science, Tokyo, Japan.
\texttt{maezawa.mw@gmail.com}}\and  
Akira Suzuki\footnotemark[2]
}
\begin{document}
\maketitle 

\begin{abstract}
An arborescence, which is a directed analogue of a spanning tree in 
an undirected graph, is one of the most fundamental combinatorial objects 
in a digraph. In this paper, we 
study arborescences in digraphs from the viewpoint of combinatorial 
reconfiguration, which is the field where 
we study reachability between two 
configurations of some combinatorial objects via some specified operations. 
Especially, we consider reconfiguration problems for 
time-respecting arborescences, which were introduced by 
Kempe, Kleinberg, and Kumar. 
We first prove that 
if the roots of the initial and target time-respecting arborescences 
are the same, then the target arborescence is always reachable from 
the initial one and we can 
find a shortest reconfiguration sequence in polynomial time. 
Furthermore, we show 
if the roots 
are not the same, then the target arborescence may not be 
reachable from the initial one.
On the other hand, we show that we can determine whether the target 
arborescence is  
reachable form the initial one in polynomial time. 
Finally, we prove that 
it is NP-hard to find a shortest reconfiguration sequence
in the case where the roots 
are not the same. 
Our results show an interesting contrast to 
the previous results for 
(ordinary) arborescences
reconfiguration problems.
\end{abstract}

\section{Introduction}

An arborescence, which is a directed analogue of a spanning tree in 
an undirected graph, is one of the most fundamental combinatorial objects 
in a digraph.
For example, the problem of finding a minimum-cost arborescence 
in a digraph with a specified root vertex has been 
extensively studied 
(see, e.g., \cite{C65,E67,G03}). 
Furthermore, the theorem on arc-disjoint arborescences 
proved by Edmonds~\cite{E73} is one of the most important 
results in graph theory. 

Motivated by a variety of settings, such as communication in distributed networks, epidemiology and scheduled transportation networks, Kempe, Kleinberg and Kumar~\cite{KempeKK02} introduced the concept of temporal networks, which can be used to 
analyze relationships involving time over networks. 
More formally, 
a temporal network consists of a digraph $D = (V,A)$ and 
a time label function $\lambda \colon A \to \mathbb{R}_+$, where 
$\mathbb{R}_+$ denotes the set of non-negative real numbers. 
For each arc $a \in A$, 
the value $\lambda(a)$ 
specifies the time at which the two end vertices of 
$a$ can communicate.
Thus, in an arborescence $T$ with a root $r$, 
in order that 
the root $r$ can send information to every vertex, 
for every vertex $v$, 
the time labels of the arcs of the directed path from $r$ to $v$ must 
be non-decreasing. 
An arborescence $T$ in $D$ satisfying this property 
is called a \emph{time-respecting} arborescence. 
For example, \figurename~\ref{fig:example} illustrates four different time-respecting arborescences in a digraph.  

\begin{figure}
    \centering
    \includegraphics{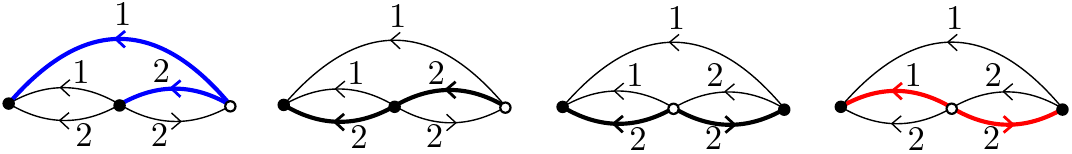}
    \caption{Examples of time-respecting arborescences.}
    \label{fig:example}
\end{figure}

In this paper, we study time-respecting arborescences in digraphs 
from the viewpoint of combinatorial 
reconfiguration.
Combinatorial reconfiguration~\cite{IDHPSUU11,N18} analyzes the reachability (and its related questions) of the solution space formed by combinatorial objects under a prescribed adjacency relation.
The algorithmic studies of combinatorial reconfiguration 
were initiated by Ito~et~al.~\cite{IDHPSUU11}, and
have been actively studied for this decade. 
(See, e.g., a survey~\cite{N18}.) 

\subsection{Our problem and related work}
In this paper, we introduce the \textsc{Time-Respecting Arborescence Reconfiguration} problem, as follows:
Given two time-respecting arborescences in a digraph $D$, we are asked to determine whether or not we can transform one into the other by exchanging a single arc in the current arborescence at a time, so that all intermediate results remain time-respecting arborescences in $D$. 
(We call this sequence of arborescences a \emph{reconfiguration sequence}.)
For example, \figurename~\ref{fig:example} shows such a transformation between the blue and red arborescences, and hence it is a yes-instance. 

This is the first paper, as far as we know, which deals with the \textsc{Time-Respecting Arborescence Reconfiguration} problem. 
However, reconfiguration problems have been studied for spanning trees and arborescences without time-respecting condition. 
For undirected graphs without time-respecting condition, 
it is well-known that every two spanning trees can be transformed into each other by exchanging a single edge at a time, because the set of spanning trees forms the family of bases of a matroid~\cite{Edmonds71,IDHPSUU11}.
For digraphs without time-respecting condition, Ito~et~al.~\cite{ItoIKNOW23} proved that every two arborescences can be transformed into each other by exchanging a single arc at a time.\footnote{Note that, in the paper~\cite{ItoIKNOW23}, an arborescence is not necessarily a spanning subgraph. Arborescence in our paper corresponds to spanning arborescence in~\cite{ItoIKNOW23}.} 
Interestingly, we note that this property does not hold when considering the time-respecting condition. (See \figurename~\ref{fig:no-instance}.)
Furthermore, in both undirected~\cite{IDHPSUU11} and directed~\cite{ItoIKNOW23} cases, shortest transformations can be found in polynomial time, because there is a transformation that exchanges only edges (or arcs) in the symmetric difference of two given spanning trees (resp.~arborescences).  

\begin{figure}
    \centering
    \includegraphics{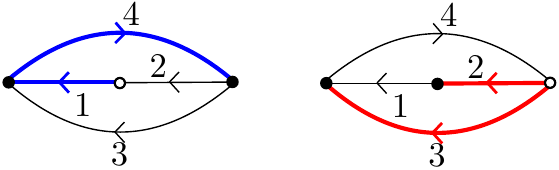}
    \caption{There is no desired transformation from the time-respecting arborescence induced by blue arcs to the time-respecting arborescence induced by red arcs.}
    \label{fig:no-instance}
\end{figure}

We here review some further previous work. 
Reconfiguration problems have been studied for various kinds of combinatorial objects, mainly in undirected graphs. 
Reconfiguration of spanning trees with additional constrains was studied in \cite{Bousquet20ESA,Bousquet22STACS}. 
Hanaka et~al.~\cite{HIMMNSSV20} introduced the framework of subgraph reconfiguration problems, and studied the problem for several combinatorial objects, including trees:
they showed that every two trees (that are not necessarily spanning) in an undirected graph can be transformed into each other by exchanging a single edge at a time unless two input trees have different numbers of edges.
Motivated by applications in motion planning, Biasi and Ophelders~\cite{BO18}, Demaine~et al.~\cite{DEHJLUU19}, and Gupta et~al.~\cite{GSZ20} studied some variants of reconfiguring undirected paths.
These variants are shown to be PSPACE-complete in general, while they are fixed-parameter tractable when parameterized by the length of input paths.

\subsection{Our contribution}
Our contributions are summarized as follows. 
We first prove that 
if the roots of the initial and target time-respecting arborescences 
are the same, then the target arborescence is always reachable from 
the initial one and we can 
find a shortest reconfiguration sequence in polynomial time. 
Furthermore, we show 
if the roots of the initial and target time-respecting arborescences 
are not the same, then the target arborescence may not be 
reachable from the initial one.
On the other hand, we show that we can determine whether the target 
arborescence is  
reachable from the initial one in polynomial time. 
Finally, we prove that 
it is NP-hard to find a shortest reconfiguration sequence
in the case where the roots of the initial and target time-respecting arborescences 
are not the same. 
Our results show an interesting contrast to 
the results for 
(ordinary) arborescence
reconfiguration problems~\cite{ItoIKNOW23}.  
See Table~\ref{table:summary} for the summary of our results. 

\begin{table}[t]
\centering 
\caption{
The column Reachability shows the answer for the question whether 
every arborescence is reachable from any other arborescence.
The column Reachability shows the results for the problem of determining 
whether the target arborescence is reachable from 
the initial one.
The symbol P means that the problem can be solved in polynomial time.
The column Shortest variant shows the results for the problem of 
finding a shortest reconfiguration sequence.}
\begin{tabular}{|r||l|l|}
\hline
& Reachability & Shortest sequence \\  
\hline \hline  
Arborescences without time-respecting & always yes~\cite{ItoIKNOW23} & P~\cite{ItoIKNOW23} \\
\hline 
Identical roots with time-respecting & always yes [Thm~\ref{thm:identical}] & P [Thm~\ref{thm:identical}]\\
\hline 
Non-identical roots with time-respecting & P [Thm~\ref{thm:non-identical}] & NP-complete [Thm~\ref{thm:shortest}] \\
\hline 
\end{tabular} 
\label{table:summary}
\end{table}

\section{Preliminaries}\label{sec:preli}
Let $D = (V, A)$ be a digraph with possibly multiple arcs.
We write $V(D)$ and $A(D)$ to denote the vertex set and arc set of $D$, respectively. 
For an arc $e = (u, v) \in A$, we call $v$ the \emph{head} of $e$, denoted $\head(e)$, and $u$ the \emph{tail} of $e$, denoted $\tail(e)$.
For $v \in V$, we denote by $\delta^-_D(v)$ the set of arcs incoming to $v$ in $D$ (i.e. $\delta^-_D(v) = \{e \mid e = (u, v) \in A\}$) and by $\delta^+_D(v)$ the set of outgoing arcs from $v$ (i.e. $\delta^+_D(v) = \{e \mid e = (v, u) \in A\}$).
We extend these notations to sets: For $X \subseteq V$, we define $\delta^-_D(X) = \{e = (u, v) \in E \mid u \in V \setminus X, v \in X\}$ and $\delta^+_D(X) = \{e = (u, v) \in E \mid u \in X, v \in V \setminus X\}$.
We may omit the subscript when no confusion arises.
For $e \in A$ (resp.~$f \in V\times V$), we denote by $D - e$ (resp.~$D + f$) the digraph obtained from $D$ by removing $e$ (resp.~adding $f$).

Let $r \in V$.
An \emph{r-arborescence} in $D$ is a spanning acyclic subgraph of $D$ in which there is exactly one (directed) path to any vertex from $r$.
An \emph{arborescence} (without specifying $r$) in $D$ is an $r$-arborescence for some $r \in V$.
Let $\lambda\colon A \to \mathbb R_{+}$.
For a directed path $P$ that traverses arcs $e_1, e_2, \dots , e_k$ in this order, 
we say that $P$ is \emph{time-respecting for $\lambda$} if $\lambda(e_{i}) \le \lambda(e_{i+1})$ for $1 \le i < k$.
An ($r$-)arborescence $T$ is time-respecting for $\lambda$ if every directed path in it is time-respecting.
When $\lambda$ is clear from the context, we may just say $P$ (or $T$) is time-respecting.

For two arborescences $T_1$ and $T_2$ in $D$, a \emph{reconfiguration sequence} between $T_1$ and $T_2$ is a sequence of arborescences $(T^0, T^1, \ldots, T^\ell)$ in $D$ with $T^0 = T_1$ and $T^\ell = T_2$ such that for $0 \le i < \ell$, $|A(T_i) \setminus A(T_{i+1})| = |A(T_{i+1}) \setminus A(T_{i})| = 1$.
In other words, $T_{i+1}$ is obtained from $T_i$ by removing an arc $e \in A(T_i)$ and adding an arc $f \notin A(T_i)$ (i.e., $T_{i+1} = T_i - e + f$).
The \emph{length} of the reconfiguration sequence is defined as $\ell$. 

\section{Minimal time-respecting \texorpdfstring{$r$-arborescences}{}}\label{sec:minimal}

In this section, we give a polynomial-time algorithm for computing a \emph{minimal} time-respecting $r$-arborescence of a digraph $D = (V, A)$ for $\lambda$.
This arborescence plays a vital role in the subsequent section.

Let $r \in V$.
We assume that $D$ has at least one time-respecting $r$-arborescence.
A function $d\colon V \to \mathbb R_+$ is defined as
\begin{align*}
    d(v) = \min\{\lambda(e) \mid e \in \delta^-(v) \text{ and } \exists \text{ time-respecting path from } r \text{ containing } e\},
\end{align*}
where we define $d(r) = 0$. 
Since every directed path from $r$ in the $r$-arborescence is time-respecting, this function $d$ is well-defined (under the assumption that $D$ has at least one time-respecting $r$-arborescence).
A time-respecting arborescence $T$ is said to be \emph{minimal} if the unique arc $e$ of $T$ incoming to $v$ satisfies $\lambda(e) = d(v)$ for $v \in V \setminus \{r\}$.
Now, we claim that (under the assumption) $D$ has a minimal time-respecting $r$-arborescence, which can be found by the following algorithm.

\begin{enumerate}
    \item Set $R := \{r\}$, $T := \emptyset$, and $d'(r) := 0$.
    \item Repeat the following two steps until $R = V$.
    \item Let $e \in \delta^+_D(R)$ minimizing $\lambda(e)$ subject to $\lambda(e) \ge d'(\tail(e))$. If there is no such an arc $e \in\delta^+_D(R)$, the algorithm halts.
    \item $R \coloneqq R \cup \{\head(e)\}$, $T \coloneqq T \cup \{e\}$, and $d'(\head(e)) \coloneqq \lambda(e)$.
\end{enumerate}

Note that $T$ is an arc set in the algorithm, which is sometimes identified with the subgraph induced by $T$. 
In the following, we use $R$, $T$, and $d'$ to denote the values of $R$, $T$, and $d'$ after the execution of the above algorithm, respectively.

\begin{lemma}\label{lem:minimal}
    Suppose that $D$ has a time-respecting $r$-arborescence.
    Then, $T$ forms a minimal time-respecting $r$-arborescence of $D$.
\end{lemma}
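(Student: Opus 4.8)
The plan is to maintain throughout the execution of the algorithm the following invariant, proved by induction on $|R|$: (i)~$T$ is a time-respecting $r$-arborescence spanning the current set $R$ (with root $r$), and (ii)~$d'(v) = d(v)$ for every $v \in R$. In addition I would show that the loop never halts while $R \neq V$. Since each iteration enlarges $R$ by exactly one vertex, this non-halting claim forces the algorithm to stop only when $R = V$, and then (i)--(ii) say precisely that $T$ is a minimal time-respecting $r$-arborescence, which is the statement. The base case $R = \{r\}$ is immediate because $d'(r) = 0 = d(r)$ and $T = \emptyset$.

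For the non-halting step, fix a time-respecting $r$-arborescence $T^*$, which exists by hypothesis. If $R \neq V$, then the unique $T^*$-path from $r$ to any vertex outside $R$ must cross the cut, so $\delta^+_D(R) \cap A(T^*) \neq \emptyset$; pick such an arc $f$. Because $T^*$ is time-respecting, the $T^*$-path reaching $\tail(f)$ certifies $d(\tail(f)) \le \lambda(f)$ (trivially so when $\tail(f) = r$, since $d(r) = 0$), and by the inductive hypothesis $d'(\tail(f)) = d(\tail(f))$. Hence $\lambda(f) \ge d'(\tail(f))$, so $f$ is a feasible arc for the selection rule and the algorithm proceeds.

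The heart of the argument is to check that the newly added arc $e$, with $w = \head(e)$, satisfies $d'(w) = \lambda(e) = d(w)$. The inequality $\lambda(e) \ge d(w)$ is the easy direction: appending $e$ to the $T$-path reaching $\tail(e)$ yields a time-respecting path through $e$, since the last arc of that path has label $d'(\tail(e)) = d(\tail(e)) \le \lambda(e)$ by the selection constraint; thus $e$ is admissible in the definition of $d(w)$, and the same observation shows that $T + e$ is again a time-respecting $r$-arborescence, now on $R \cup \{w\}$. The reverse inequality $\lambda(e) \le d(w)$ is the main obstacle. Suppose instead $d(w) < \lambda(e)$, and take a time-respecting path $P$ from $r$ to $w$ whose last arc has label $d(w)$. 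Let $f$ be the first arc of $P$ leaving $R$ (it exists, as $r \in R$ and $w \notin R$); since $P$ is time-respecting, $\lambda(f) \le d(w) < \lambda(e)$. Exactly as in the previous paragraph, the prefix of $P$ reaching $\tail(f)$ certifies $d(\tail(f)) \le \lambda(f)$, i.e.\ $\lambda(f) \ge d'(\tail(f))$, so $f \in \delta^+_D(R)$ is itself a \emph{feasible} candidate with $\lambda(f) < \lambda(e)$. This contradicts the choice of $e$ as a feasible arc of minimum label.

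Therefore $\lambda(e) = d(w) = d'(w)$, which completes the inductive step and the proof. The delicate point, which I expect to be the crux, is the reverse inequality: one must turn a hypothetical cheaper time-respecting path witnessing a smaller $d(w)$ into a concrete feasible arc leaving $R$ of smaller label, and this is where both the time-respecting property of $P$ and the inductive identity $d' = d$ on $R$ are used together.
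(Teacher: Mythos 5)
Your proposal is correct and follows essentially the same argument as the paper: the same greedy-exchange contradiction (a hypothetical cheaper time-respecting path into the new vertex yields a feasible cut arc of smaller label, contradicting the selection rule) establishes $d'=d$ on $R$, and your ``never halts while $R \neq V$'' claim is just the contrapositive form of the paper's termination argument that a time-respecting path crossing the cut provides a feasible arc. The only cosmetic difference is that you certify non-halting with an arc of a fixed time-respecting $r$-arborescence $T^*$, while the paper picks a vertex outside $R$ whose time-respecting path lies entirely in $R$ except for its endpoint; both uses are equivalent.
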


\begin{proof}
    By the construction of $T$, every (directed) path from $r$ in $T$ is time-respecting, and hence we have $d'(v) \ge d(v)$ for $v \in R$.
    For $1 \le i \le |T|$, we let $e_i$ be the arc selected at Step 3 in the $i$th iteration of the execution.
    Let $v_0 = r$ and let $v_i = \head(e_i)$.
    In the following, we first show, by induction, that $d'(v_i) = d(v_i)$ for all $0 \le i \le |T|$.
    The base case $i = 0$ is clear from the definition (i.e., $d'(r) = d(r)$).

    Let $i \ge 1$ and let $R_i = \{v_j \mid 0 \le j < i\}$ be the set of vertices that are ``reached'' from $r$ before the $i$th iteration.
    Suppose for contradiction that $d'(v_i) > d(v_i)$.
    Let $P$ be a time-respecting path from $r$ to $v_i$ in which the unique arc $e$ incoming to $v_i$ satisfies $\lambda(e) = d(v_i)$.
    Since $r \in R_i$ and $v_i \notin R_i$, there is an arc $e'$ on $P$ such that $\tail(e') \in R_i$ and $\head(e') \notin R_i$, i.e., $e' \in \delta^+_D(R_i)$.
    For such an arc $e'$, we have $\lambda(e') \leq \lambda(e) = d(v_i) < d'(v_i) = \lambda(e_i)$.
    This implies that $e_i$ cannot be selected at Step 3 in the $i$th iteration, 
    because $e' \in \delta^+(R_i)$ and $\lambda(e') \ge d(\tail(e')) = d'(\tail(e'))$ by the induction hypothesis. 

    We next show that $R = V$, which implies that $T$ forms a minimal time-respecting $r$-arborescence of $D$.
    As we have shown above, $d'(v) = d(v)$ for all $v \in R$.
    Suppose for contradiction that $R \neq V$.
    We note that every arc $e \in \delta_D^+(R)$ satisfies that $\lambda(e) < d'(\tail(e))$.
    Let $v \in V \setminus R$ and let $P$ be an arbitrary time-respecting path from $r$ to $v$ in $D$.
    We can choose such a vertex $v$ in such a way that all vertices of $P$ except for $v$ belong to $R$, as $r \in R$ and any subpath of $P$ starting from $r$ is also time-respecting.
    Let $e$ be the unique arc of $P$ incoming to $v$.
    Since $P$ is time-respecting and $\tail(e) \in R$, $\lambda(e) \ge d(\tail(e)) = d'(\tail(e))$.
    This contradicts the fact that every arc $e' \in \delta_D^+(R)$ satisfies $\lambda(e') < d'(\tail(e'))$.
    Therefore, the lemma follows.
\end{proof}

As a consequence of this lemma, we have the following corollary.

\begin{corollary}\label{cor:minimal}
  In polynomial time, we can compute a minimal time-respecting $r$-arborescence of $D$ or 
  conclude that $D$ has no time-respecting $r$-arborescence.
\end{corollary}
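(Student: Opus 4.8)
The plan is to observe that \cref{lem:minimal} already carries the entire combinatorial burden, so the corollary reduces to running the stated algorithm and bounding its cost. First I would note that each iteration of the main loop adds exactly one vertex to $R$ (the head of the selected arc), so the algorithm performs at most $|V| - 1$ iterations before either reaching $R = V$ or halting at Step 3 for lack of an admissible arc. This gives exactly two terminal cases to analyze, and the correctness of the algorithm's verdict in each is what I would establish.

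In the case $R = V$, by the construction every vertex other than $r$ has exactly one incoming arc of $T$, every vertex is reachable from $r$, and every directed path from $r$ in $T$ is time-respecting, so $T$ is a time-respecting $r$-arborescence; its minimality then follows from \cref{lem:minimal}, whose proof in fact establishes $d'(v) = d(v)$ for all $v$. In the case $R \neq V$, I would argue by the contrapositive of \cref{lem:minimal}: if $D$ contained any time-respecting $r$-arborescence, the lemma would force the algorithm to terminate with $R = V$, contradicting $R \neq V$. Hence an early halt certifies that no time-respecting $r$-arborescence exists, and the algorithm may report this safely.

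For the running time, each iteration selects an arc $e \in \delta^+_D(R)$ minimizing $\lambda(e)$ subject to $\lambda(e) \ge d'(\tail(e))$, which a single scan over $A$ computes in $O(|A|)$ time, while updating $R$, $T$, and $d'$ is immediate; with at most $|V| - 1$ iterations this yields an $O(|V|\,|A|)$ bound, so the procedure runs in polynomial time. Since the corollary is essentially a bookkeeping wrapper around \cref{lem:minimal}, I do not anticipate a genuine obstacle; the only point demanding care is the logic of the halting case, where one must be certain that an early halt genuinely witnesses non-existence rather than merely failing to find an arborescence. This is exactly what the contrapositive of \cref{lem:minimal} supplies.
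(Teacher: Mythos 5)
Your proposal is correct and matches the paper's intent: the paper treats \cref{cor:minimal} as an immediate consequence of \cref{lem:minimal}, which is exactly the argument you spell out (direct application when $R = V$, contrapositive when the algorithm halts early, plus a routine polynomial running-time bound). No gaps; your version is simply a more explicit write-up of the same reasoning.
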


\section{Time-respecting \texorpdfstring{$r$-arborescence}{} reconfiguration}\label{sec:fixed-root-case}

In this section, we give a polynomial-time algorithm for finding a shortest reconfiguration sequence between given two time-respecting $r$-arborescences of a graph $D$ such that all intermediates are also time-respecting $r$-arborescences of $D$ (if any).
We show, in fact, that such a (shortest) reconfiguration sequence between $T_1$ and $T_2$ always exists, in contrast to the fact that, for some digraph, there is no reconfiguration sequence between time-respecting arborescences of distinct roots (see \figurename~\ref{fig:no-instance}).  

Let $D = (V, A)$ be a digraph with $\lambda\colon A \to \mathbb R_+$ and let $r \in V$.
Let $T_1$ and $T_2$ be time-respecting $r$-arborescences of $D$.
We construct a digraph $D^* = (V, A^*)$, where $A^* = A(T_1) \cup A(T_2)$.
As there is a time-respecting $r$-arborescence of $D^*$ (say, $T_1$), a minimal time-respecting $r$-arborescence $T^*$ of $D^*$ can be computed in polynomial time with the algorithm in Corollary~\ref{cor:minimal}.

To show the existence of a reconfiguration sequence between $T_1$ and $T_2$ in $D$, it suffices to give a reconfiguration sequence between $T_1$ and $T^*$ in $D^*$ as $T_2$ is symmetric and $D^*$ is a subgraph of $D$.

We transform $T_1$ into the minimal time-respecting $r$-arborescence $T^*$ as follows.
Let $k = |A(T^*)|$.
Let $e_1, e_2, \ldots, e_{k}$ be the arcs of $T^*$ such that $e_i$ is selected at Step 3 in the $i$th iteration of the algorithm in Corollary~\ref{cor:minimal}.
We set $T^0 = T_1$.
For $1 \le i \le k$ in increasing order, we define $T^i = T^{i-1} - f_i + e_i$ (possibly $e_i=f_i$), where $f_i$ is the unique arc of $T_1$ incoming to $\head(e_i)$.
Clearly, $T^k = T^*$.
Since the update operation preserves the indegree of each vertex
and the reachability of each vertex from $r$,
every $T^{i}$ is always an $r$-arborescence of $D^*$.
Moreover, the following lemma ensures that $T^{i}$ is time-respecting.

\begin{lemma}\label{lem:r-Arb:time-respecting}
    For $0 \le i \le k$, $T^i$ is a time-respecting $r$-arborescence of $D^*$.
\end{lemma}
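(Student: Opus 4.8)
The plan is to argue by induction on $i$, with base case $T^0 = T_1$ time-respecting by hypothesis. The guiding observation is that an $r$-arborescence is time-respecting if and only if at every non-root vertex $x$ the label of the incoming arc is at most the label of every outgoing arc of $x$; indeed, along any directed path from $r$ consecutive arcs meet exactly at such a vertex, so this local condition propagates to all root-paths. Since $T^i$ is obtained from $T^{i-1}$ only by replacing the arc $f_i$ entering $v_i \coloneqq \head(e_i)$ with $e_i$, I only need to re-verify the local condition at the two vertices the swap touches: at $v_i$ (whose incoming arc becomes $e_i$) and at $\tail(e_i)$ (which acquires $e_i$ as a new outgoing arc).

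Two monotonicity facts will do the work. First, I would show that the arcs are selected in non-decreasing label order, $\lambda(e_1) \le \lambda(e_2) \le \cdots \le \lambda(e_k)$. This is read off from Step 3 of the algorithm in \Cref{cor:minimal}: writing $R_i = \{v_0, \dots, v_{i-1}\}$ (with $v_0 = r$) for the vertices reached before iteration $i$, every arc $e \in \delta^+_D(R_{i+1})$ eligible at iteration $i+1$ (i.e.\ with $\lambda(e) \ge d'(\tail(e))$) satisfies $\lambda(e) \ge \lambda(e_i)$, because either $\tail(e) = v_i$, giving $\lambda(e) \ge d'(v_i) = \lambda(e_i)$, or $\tail(e) \in R_i$ and $\head(e) \notin R_i$, so $e$ was already eligible at iteration $i$ and $\lambda(e) \ge \lambda(e_i)$ by the minimality defining $e_i$. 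Second, I would show $\lambda(e_j) \le \lambda(f_j)$ for every $j$: minimality of $T^*$ gives $\lambda(e_j) = d(v_j)$, while the time-respecting path in $T_1$ ending at $v_j$ via $f_j$ witnesses $d(v_j) \le \lambda(f_j)$.

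Granting these, the inductive step is immediate. At $\tail(e_i)$: since $e_i \in \delta^+_D(R_i)$ its tail lies in $R_i$, so either $\tail(e_i) = r$ (nothing to check) or $\tail(e_i) = v_j$ with $j < i$, in which case the incoming arc of $v_j$ in $T^i$ has already been updated to $e_j$ and the needed inequality $\lambda(e_j) \le \lambda(e_i)$ is exactly the first monotonicity fact. At $v_i$: every outgoing arc of $v_i$ in $T^i$ is some $f_m$ with $m > i$ and $\tail(f_m) = v_i$ (an $e_m$ with $m \le i$ cannot have tail $v_i$, as that would force $v_i \in R_m$, i.e.\ $i < m$); since $T_1$ is time-respecting and $f_i$ is its arc into $v_i$, we have $\lambda(f_i) \le \lambda(f_m)$, whence $\lambda(e_i) \le \lambda(f_i) \le \lambda(f_m)$ using the second fact. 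This completes the induction.

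The main obstacle is the first monotonicity fact, to which everything else reduces cleanly. Its delicate point is checking that an arc eligible but unselected at iteration $i$ stays eligible at iteration $i+1$ against an unchanged threshold: here I must use that $\tail(e) \in R_i$ fixes $d'(\tail(e))$ before iteration $i$, and that $\head(e) \notin R_{i+1}$ keeps $e$ in $\delta^+_D(R_{i+1})$. Once the selection order is label-monotone, the ``mixed'' arborescence $T^i$ --- the early arcs $e_1, \dots, e_i$ of $T^*$ together with the surviving arcs $f_{i+1}, \dots, f_k$ of $T_1$ --- respects time at each junction, since every incoming arc is dominated by every outgoing one.
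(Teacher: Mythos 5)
Your proof is correct, and it shares its two essential ingredients with the paper's: induction on $i$, and the minimality inequality $\lambda(e_i) \le \lambda(f_i)$ (your Fact~2 is exactly the paper's step $d(\head(f_i)) = \lambda(e_i) \le \lambda(f_i)$, using that $f_i$ lies on a time-respecting path in $T_1$). Where you genuinely diverge is in the surrounding scaffolding. The paper argues on whole paths: in $T^{i-1}+e_i$ there are two root-paths $P$ and $P'$ to $\head(e_i)$, and $P$ is time-respecting simply because it consists of arcs in $\{e_1,\dots,e_i\} \subseteq A(T^*)$ and Lemma~\ref{lem:minimal} already established that $T^*$ is a time-respecting arborescence; together with $\lambda(e_i)\le\lambda(f_i)$ this handles every path through $v_i$. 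You instead reduce everything to local junction conditions at the two affected vertices and re-prove, from the algorithm itself, that the selected labels satisfy $\lambda(e_1)\le\cdots\le\lambda(e_k)$ (your Fact~1, via the eligible-arcs exchange argument). This makes your proof more self-contained, but Fact~1 is stronger than what you use: the only instance needed is $\lambda(e_j)\le\lambda(e_i)$ for $v_j=\tail(e_i)$, and that is immediate from the eligibility constraint in Step~3 of the algorithm, namely $\lambda(e_i)\ge d'(\tail(e_i))=\lambda(e_j)$ --- no argument about unselected arcs remaining eligible is required; this shortcut is essentially what the paper gets by citing Lemma~\ref{lem:minimal}. One last point: like the paper, you prove only the time-respecting property; that each $T^i$ remains an $r$-arborescence (the swap preserves indegrees and reachability from $r$) is handled in the text preceding the lemma, and your local characterization implicitly presupposes it, so you should either cite that remark or include the one-line justification.
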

\begin{proof}
    It suffices to show that for $1 \le i \le k$, $T^i$ is time-respecting, assuming that $T^{i-1}$ is time-respecting.
    It suffices to consider the case when $f_i \neq e_i$. 
    Since $\head(e_i) = \head(f_i)$, $T^{i-1} + e_i$ has exactly two paths $P$ and $P'$ from $r$ to $\head(e_i)$, where $P$ (resp.~$P'$) is the one that contains $e_i$ (resp.~$f_i$).
    Since $T^{i-1}$ is time-respecting, $P'$ is indeed time-respecting.
    Similarly, since $P (\subseteq \{ e_1, e_2, \dots, e_i \})$ is a path in $T^*$, it is also time-respecting.
    As $T^*$ is minimal, we have $d(\head(f_i)) = \lambda(e_i)$, which implies that $\lambda(f_i) \ge \lambda(e_i)$.
    Therefore, $T^{i}$ is time-respecting.
\end{proof}

As $T^k = T^*$, this lemma shows that there is a reconfiguration sequence between $T_1$ and $T^*$. 
Since we update $T \leftarrow T - e_i + f_i$ only when $e_i \neq f_i$, 
the obtained sequence has length $|A(T^*) \setminus A(T_1)|$. 
Similarly, there is a sequence between $T_2$ and $T^*$ of length $|A(T^*) \setminus A(T_2)|$. 
By combining them, we obtain a reconfiguration sequence from $T_1$ to $T_2$ of length $|A(T^*) \setminus A(T_1)| + |A(T^*) \setminus A(T_2)|$. 

We now show that this length is equal to $|A(T_1) \setminus A(T_2)|$, which implies that the sequence is shortest among all reconfiguration sequences between $T_1$ and $T_2$.
For any $e \in A(T_1) \cap A(T_2)$, since $e$ is a unique arc entering $\head(e)$ in $D^*$, we obtain $e \in A(T^*)$. This means that $A(T_1) \cap A(T_2) \subseteq A(T^*)$. 
Then, we obtain 
\begin{align*}
& |A(T^*) \setminus A(T_1)| + |A(T^*) \setminus A(T_2)| 
= |A(T^*) \setminus (A(T_1) \cap A(T_2))| \\
&= |A(T^*)| - |A(T_1) \cap A(T_2)| 
= |A(T_1) \setminus A(T_2)|, 
\end{align*}
where we use $A(T^*) \subseteq A(T_1) \cup A(T_2)$ in the first equality, 
use $A(T_1) \cap A(T_2) \subseteq A(T^*)$ in the second equality, and 
use $|A(T^*)| = |A(T_1)|$ in the last equality. 

\begin{theorem}\label{thm:identical}
    There is a reconfiguration sequence between two time-respecting $r$-arborescence $T_1$ and $T_2$ of $D$ with length $|A(T_1) \setminus A(T_2)|$.
    Moreover, such a reconfiguration sequence can be found in polynomial time.
\end{theorem}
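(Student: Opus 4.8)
The plan is to package the constructions and lemmas already developed in this section rather than to build anything new. First I would recall the auxiliary digraph $D^* = (V, A^*)$ with $A^* = A(T_1) \cup A(T_2)$ and the minimal time-respecting $r$-arborescence $T^*$ of $D^*$: since $T_1$ itself witnesses that $D^*$ admits a time-respecting $r$-arborescence, Corollary~\ref{cor:minimal} supplies such a $T^*$ in polynomial time. I would then take the explicit sequence $T^0 = T_1, T^1, \ldots, T^k = T^*$ defined just before Lemma~\ref{lem:r-Arb:time-respecting} and invoke that lemma to certify that every $T^i$ is a time-respecting $r$-arborescence, so this is a bona fide reconfiguration sequence in $D^*$; because $A^* \subseteq A$, it is equally a valid sequence in $D$. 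As an exchange is performed only when $f_i \neq e_i$, its length is exactly $|A(T^*) \setminus A(T_1)|$.

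Next I would run the identical argument with the roles of $T_1$ and $T_2$ interchanged to obtain a reconfiguration sequence from $T_2$ to $T^*$ of length $|A(T^*) \setminus A(T_2)|$; reversing it and concatenating the two halves yields a reconfiguration sequence from $T_1$ to $T_2$ of length $|A(T^*) \setminus A(T_1)| + |A(T^*) \setminus A(T_2)|$. The polynomial-time claim is then immediate: $T^*$ is computed in polynomial time, each of the at most $2|V|$ exchanges is read off in constant time once $T^*$ and the selection order of its arcs are known, and the concatenation is trivial.

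It remains to simplify this length to $|A(T_1) \setminus A(T_2)|$, which I regard as the only genuine point to verify. The key structural observation is that any arc $e \in A(T_1) \cap A(T_2)$ is the unique arc of $D^*$ entering $\head(e)$: each of $T_1$ and $T_2$ contributes exactly one arc into $\head(e)$, and both contribute $e$, so $e$ is the sole arc of $A^*$ into that vertex; hence every $r$-arborescence of $D^*$, and in particular $T^*$, must contain $e$. This gives $A(T_1) \cap A(T_2) \subseteq A(T^*) \subseteq A(T_1) \cup A(T_2)$, whereupon the counting identity $|A(T^*) \setminus A(T_1)| + |A(T^*) \setminus A(T_2)| = |A(T^*)| - |A(T_1) \cap A(T_2)| = |A(T_1) \setminus A(T_2)|$ follows, using $|A(T^*)| = |A(T_1)|$ since both are spanning $r$-arborescences of $D^*$. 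Finally, to see the sequence is shortest, I would note that a single exchange $T^i \mapsto T^i - e + f$ changes $|A(T^i) \setminus A(T_2)|$ by at most one, so this quantity, starting at $|A(T_1) \setminus A(T_2)|$ and ending at $0$, forces every reconfiguration sequence to have length at least $|A(T_1) \setminus A(T_2)|$. The main obstacle, such as it is, does not lie in this assembly but in the earlier Lemma~\ref{lem:minimal} and Lemma~\ref{lem:r-Arb:time-respecting}, which already do the real work of guaranteeing that the minimal arborescence exists and that the intermediate arborescences stay time-respecting.
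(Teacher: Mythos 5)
Your proposal is correct and follows essentially the same route as the paper: build $D^* = (V, A(T_1)\cup A(T_2))$, compute the minimal time-respecting $r$-arborescence $T^*$ via Corollary~\ref{cor:minimal}, reconfigure each of $T_1$ and $T_2$ to $T^*$ using Lemma~\ref{lem:r-Arb:time-respecting}, and establish the length via $A(T_1)\cap A(T_2)\subseteq A(T^*)\subseteq A(T_1)\cup A(T_2)$ together with $|A(T^*)|=|A(T_1)|$. The only addition beyond the paper's write-up is your explicit lower-bound argument showing every sequence needs at least $|A(T_1)\setminus A(T_2)|$ exchanges, which the paper leaves implicit and which is a welcome, correct touch.
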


\section{Time-respecting arborescence reconfiguration}\label{sec:unfixed-root-case}
For two time-respecting arborescences $T_1$ and $T_2$ in a digraph $D=(V,A)$ with $\lambda: A\rightarrow \mathbb{R}_+$,
we consider the problem of determining whether there exists a reconfiguration sequence from $T_1$ to $T_2$,
which we call \textsc{Time-respecting Arborescence Reconfiguration}.
By Theorem~\ref{thm:identical}, for any two $r$-arborescences $T_1$ and $T_2$, there exists a reconfiguration sequence from $T_1$ to $T_2$, that is,
the answer of \textsc{Time-respecting $r$-Arborescence Reconfiguration} is always yes.
However, \textsc{Time-respecting Arborescence Reconfiguration} does not have the property, that is,
there is a no-instance in the problem (see Fig.~\ref{fig:no-instance}).
In this section, we show that the problem can be solved in polynomial time.
\begin{theorem} \label{thm:non-identical}
We can solve \textsc{Time-respecting Arborescence Reconfiguration} in polynomial time.
\end{theorem}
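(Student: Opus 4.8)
The plan is to reduce the non-identical-root case to the identical-root case handled by \Cref{thm:identical}. The key observation is that along any reconfiguration sequence, the root can only change one step at a time, and each single-arc exchange changes the root in a very restricted way. Concretely, if $T'=T-e+f$ where both $T$ and $T'$ are arborescences, then the operation removes the arc $f'$ entering $\head(f)$ and the new arc $f$ enters a vertex that had no incoming arc; the only way the root can move from $r$ to $r'$ is if $f$ enters the old root $r$ (giving $r$ an indegree) while the vertex $r'=\tail$ of the removed arc loses its incoming arc. Thus a root change from $r$ to $r'$ happens exactly when we add an arc into $r$ and delete the arc out of $r'$ (so that $r'$ becomes the new source). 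First I would formalize this single-step root-transition rule and characterize precisely which pairs $(r,r')$ admit a time-respecting root-changing move, and what the resulting arborescence looks like.

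Next I would build a directed auxiliary graph $H$ on the vertex set $V$ (the possible roots), placing an arc from $r$ to $r'$ whenever there exists \emph{some} time-respecting $r$-arborescence that can be transformed, by a single time-respecting root-changing exchange, into a time-respecting $r'$-arborescence. The point of \Cref{thm:identical} is that the choice of \emph{which} $r$-arborescence we sit at is irrelevant: within a fixed root $r$, any time-respecting $r$-arborescence can reach any other (and a minimal one can always be computed via \Cref{cor:minimal}). Therefore $T_1$ (rooted at $r_1$) can be transformed into $T_2$ (rooted at $r_2$) if and only if $r_2$ is reachable from $r_1$ in $H$. I would prove both directions: (i) if there is a directed path $r_1\to\cdots\to r_2$ in $H$, then by chaining identical-root reconfigurations (to reach an appropriate arborescence admitting each root-change) with the single root-changing moves, we obtain a full reconfiguration sequence; (ii) conversely, any reconfiguration sequence from $T_1$ to $T_2$ projects, by recording the sequence of roots, onto a walk in $H$ from $r_1$ to $r_2$.

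The main obstacle will be step (i), and specifically verifying that each arc of $H$ can actually be ``used'' regardless of the current arborescence: when we arrive at root $r$ sitting at some arbitrary time-respecting $r$-arborescence $T$, we must first reconfigure (within root $r$) to an arborescence $T'$ from which the prescribed root-changing move to $r'$ is available and time-respecting, and then perform that move. Ensuring such a $T'$ exists whenever the arc $(r,r')$ is present in $H$ requires a careful argument, likely showing that the minimal time-respecting $r$-arborescence is a canonical target that always admits the move when any $r$-arborescence does. I would therefore define the arcs of $H$ through the minimal arborescences: put an arc $r\to r'$ in $H$ precisely when the root-changing exchange is time-respecting from the minimal $r$-arborescence, and argue this captures all realizable transitions. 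Finally, computing $H$ needs only polynomially many minimal-arborescence computations (one per candidate root, each by \Cref{cor:minimal}) together with a check over candidate exchange arcs, and reachability in $H$ is a linear-time graph search, giving the overall polynomial-time bound.
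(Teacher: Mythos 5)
Your overall framework---collapsing each class of time-respecting $r$-arborescences into a single node (justified by \Cref{thm:identical}), building a transition graph $H$ on the candidate roots, and reducing the question to reachability in $H$---is exactly the paper's construction of $\mathcal{G}'(D)$, and both directions of your equivalence (chaining same-root reconfigurations with root-changing moves; projecting a sequence onto a walk of roots) are sound. The gap is in how you compute $H$. Your criterion ``put an arc $r \to r'$ in $H$ precisely when the root-changing exchange is time-respecting from the minimal $r$-arborescence'' is false. Counterexample: $V=\{r_1,r_2,x\}$ with arcs $(r_1,r_2),(r_1,x)$ of label $1$ and $(r_2,x),(r_2,r_1)$ of label $2$. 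The minimal time-respecting $r_1$-arborescence is $\{(r_1,r_2),(r_1,x)\}$, and the only root-changing exchange from it yields $\{(r_2,r_1),(r_1,x)\}$, which is not time-respecting (labels $2$ then $1$ along $r_2\to r_1\to x$); yet the non-minimal $r_1$-arborescence $\{(r_1,r_2),(r_2,x)\}$ does admit the exchange, producing the time-respecting $r_2$-arborescence $\{(r_2,r_1),(r_2,x)\}$. So your $H$ omits the transition $r_1\to r_2$, and your algorithm answers ``no'' on this yes-instance. Minimality is in fact the wrong canonical choice: it drives the labels of arcs leaving the root \emph{down}, while a root-changing move needs them \emph{large}. (The paper checks this case---condition (i) of Lemma~\ref{thm:r_arborescence}---by first deleting from $D$ the arcs leaving $r_1$ with too-small labels and only then invoking \Cref{cor:minimal} on the modified digraph.)

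The deeper problem is that your intended definition of $H$---an arc $r\to r'$ whenever \emph{some} time-respecting $r$-arborescence admits a single root-changing exchange to an $r'$-arborescence---cannot be decided in polynomial time at all (unless $\mathrm{P}=\mathrm{NP}$), so no refinement of your ``canonical witness'' idea can work. By Lemma~\ref{thm:r_arborescence}, this adjacency holds in particular when there is a $t$-labeled extendible directed cycle through both $r$ and $r'$, and detecting a directed cycle through two prescribed vertices is NP-hard~\cite{FORTUNE1980111}. This is precisely the obstacle the paper must circumvent, and it is the idea your proposal is missing: the paper does \emph{not} compute $\mathcal{G}'(D)$ exactly, but a supergraph $\hat{\mathcal{G}}(D)$ in which the cycle condition is relaxed to ``$r$ and $r'$ lie in a common $t$-labeled extendible strongly connected component of $D_t$,'' which is polynomial-time checkable; it then proves (Lemmas~\ref{relax01} and~\ref{relax02}) that every extra edge of $\hat{\mathcal{G}}(D)$ is realized by a path in $\mathcal{G}'(D)$, so reachability---though not adjacency---is preserved. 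Without this relaxation step, or some substitute for it, your plan does not yield a polynomial-time algorithm.
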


To prove the theorem, we introduce some concepts.  
For $t \in \mathbb{R}_+$, we say that a subgraph $H=(V(H), A(H))$ of $D$ is \emph{$t$-labeled extendible} if
    \begin{enumerate}
        \item $\lambda(e) = t$ for all $e \in A(H)$ and 
        \item the digraph $D^{\prime}$ obtained from $D$ by contracting $H$ into a vertex $r_H$ has a time-respecting $r_H$-arborescence $T$ 
              such that $\lambda(e) \geq t$ for $e \in A(T)$.
    \end{enumerate}
We can see that if a $t$-labeled extendible subgraph $H$ contains an arborescence $T_H$, then $A(T_H) \cup A(T)$ induces a time-respecting arborescence in $D$. 

Let $D=(V,A)$ be a digraph and
define the \emph{reconfiguration graph} $\mathcal{G}(D)$ as follows: the vertex set consists of all the time-respecting arborescences of $D$ and two time-respecting arborescences are joined by an (undirected) edge in the reconfiguration graph if and only if one is obtained from the other by exchanging a single arc.
For a vertex $r \in V$, let $\mathcal{G}_r$ be the subgraph of $\mathcal{G}(D)$ induced by the time-respecting $r$-arborescences. By Theorem~\ref{thm:identical}, $\mathcal{G}_r$ is connected for $r \in V(D)$.
Let $\mathcal{G}^{\prime}(D)$ be the graph obtained from $\mathcal{G}(D)$ by contracting $\mathcal{G}_r$ into a vertex $v_r$ for each $r \in V(D)$.
We show the following necessary and sufficient condition for two vertices in $\mathcal{G}^{\prime}(D)$ to be adjacent to each other.

\begin{lemma}\label{thm:r_arborescence}
Let $r_1$ and $r_2$ be two distinct vertices in a digraph $D=(V,A)$ with $\lambda: A\rightarrow \mathbb{R}_+$.
Then $v_{r_1}$ and $v_{r_2}$ are adjacent in $\mathcal{G}^{\prime}(D)$ if and only if
one of the following holds:
\begin{enumerate}
    \item[\upshape(i)] there exist an arc $f=(r_2,r_1)$ and a time-respecting $r_1$-arborescence $T_1$ such that
    $\lambda(f) \leq \lambda(e^{\prime}) $ for each $e^{\prime} \in \delta^+_{T_1}(r_1)\setminus \delta^-_{T_1}(r_2)$,
    \item[\upshape(ii)] there exist an arc $e=(r_1,r_2)$ and a time-respecting $r_2$-arborescence $T_2$ such that
    $\lambda(e) \leq \lambda(e^{\prime}) $ for each $e^{\prime} \in \delta^+_{T_2}(r_2)\setminus \delta^-_{T_2}(r_1)$,
    \item[\upshape(iii)] 
    for some $t \in \mathbb{R}_+$, $D$ has a $t$-labeled extendible directed cycle $C$ that contains both $r_1$ and $r_2$. 
\end{enumerate}
\end{lemma}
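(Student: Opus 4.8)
The plan is to reduce adjacency of $v_{r_1}$ and $v_{r_2}$ in $\mathcal{G}'(D)$ to the existence of a single-arc exchange between some time-respecting $r_1$-arborescence $T_1$ and some time-respecting $r_2$-arborescence $T_2$, and then to analyze the structure of such an exchange. First I would record the following indegree fact: if $T_2 = T_1 - e + f$ with $T_1$ an $r_1$-arborescence and $T_2$ an $r_2$-arborescence, then comparing indegrees (recall $r_1\neq r_2$) forces $\head(f) = r_1$ and $\head(e) = r_2$. Writing $f = (w, r_1)$, the arc $f$ closes the $T_1$-path from $r_1$ to $w$ into a unique cycle $C_f$; since removing $e$ must destroy this cycle, $r_2$ lies on $C_f$ and $e$ is the arc of $C_f$ entering $r_2$. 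Thus $C_f$ is a directed cycle through both $r_1$ and $r_2$, and I would split the analysis according to its shape: the case $w = r_2$ (so $f = (r_2, r_1)$), the case $e = (r_1, r_2)$, and the remaining case in which both endpoints differ from a direct arc between the roots.

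For the backward direction I would build the exchange explicitly in each case. Under (i), given $T_1$ and $f = (r_2, r_1)$, put $T_2 = T_1 - e + f$ where $e$ is the arc of $T_1$ entering $r_2$; the only new path constraint occurs at the junction $r_1$, and one checks $\delta^+_{T_2}(r_1) = \delta^+_{T_1}(r_1) \setminus \delta^-_{T_1}(r_2)$, so the displayed inequality is exactly what makes $T_2$ time-respecting. Case (ii) is symmetric. Under (iii), let $T$ be the time-respecting $r_H$-arborescence of the contraction $D'$ witnessing extendibility and let $T^{\uparrow}$ be its lift to $D$; then I would set $T_1' = (C - (\text{arc into $r_1$})) \cup T^{\uparrow}$ and $T_2' = (C - (\text{arc into $r_2$})) \cup T^{\uparrow}$. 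Each is an arborescence rooted at the corresponding vertex, and it is time-respecting because every directed path first runs through cycle arcs of label $t$ and then through arcs of $T^{\uparrow}$ of label at least $t$ along the time-respecting $T$. Moreover $T_2' = T_1' - (\text{arc into $r_2$}) + (\text{arc into $r_1$})$ is a single exchange, yielding the required edge between $v_{r_1}$ and $v_{r_2}$.

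For the forward direction I would begin from a valid exchange and treat the three shapes of $C_f$. If $w = r_2$, time-respectingness of $T_2$ at $r_1$ gives (i), and the case $e = (r_1, r_2)$ gives (ii) symmetrically. The crux is the remaining case. Writing the $r_1$-to-$w$ path of $T_1$ as $g_1, \dots, g_s$ (ending with the arc into $r_2$) followed by $h_1, \dots, h_l$ (ending with the arc into $w$), time-respectingness of $T_1$ gives $\lambda(g_1) \le \cdots \le \lambda(g_s) \le \lambda(h_1) \le \cdots \le \lambda(h_l)$, while time-respectingness of $T_2$ along $r_2 \to \cdots \to w \xrightarrow{f} r_1 \to \cdots$ gives $\lambda(h_1) \le \cdots \le \lambda(h_l) \le \lambda(f) \le \lambda(g_1) \le \cdots \le \lambda(g_{s-1})$. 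Chaining these two inequalities collapses everything to one value $t$, so every arc of $C_f$ (including $f$) carries label $t$. It then remains to produce the extension: since $f \notin T_1$, the arcs $C_f \setminus \{f\}$ form a connected subpath of $T_1$, and contracting it yields an $r_H$-arborescence $T = T_1 / C_f$ whose arcs are exactly the off-cycle arcs of $T_1$. Each such arc has label at least $t$, because along any $T_1$-path the first arc leaving the cycle already has label at least $t$ and subsequent labels are nondecreasing; and $T$ is time-respecting as a family of suffixes of $T_1$-paths. Hence $C_f$ is $t$-labeled extendible and contains $r_1, r_2$, establishing (iii).

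I expect the main obstacle to be this last case: carefully combining the two time-respecting chains to force a single common label $t$ on the whole cycle, and then verifying that $T_1 / C_f$ is a legitimate witness of extendibility, namely an $r_H$-arborescence that is time-respecting with all labels at least $t$. By contrast, the backward time-respecting checks for $T_1'$ and $T_2'$ should be routine once one observes that every directed path decomposes as a block of label-$t$ cycle arcs followed by a time-respecting tail drawn from $T^{\uparrow}$.
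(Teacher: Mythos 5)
Your proof is correct and follows essentially the same route as the paper's: the indegree argument forcing $\head(f)=r_1$ and $\head(e)=r_2$, the unique cycle in $T_1+f$ containing $e$, the two time-respecting chains of inequalities that collapse all cycle labels to a single value $t$ in the case leading to (iii), and the explicit single-arc exchanges for sufficiency. The only difference is one of detail, not substance: you spell out verifications the paper leaves implicit, such as the identity $\delta^+_{T_2}(r_1) = \delta^+_{T_1}(r_1)\setminus\delta^-_{T_1}(r_2)$ in case (i) and the check that the contracted arborescence $T_1/C_f$ has all labels at least $t$ and is time-respecting.
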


\begin{proof}
\medskip
\noindent 
\textbf{[Necessity (``only if'' part)]} Suppose that $v_{r_1}$ and $v_{r_2}$ are adjacent in $\mathcal{G}^{\prime}(D)$.
Then there exist a time-respecting $r_1$-arborescence $T_1$ and two arcs $e$ and $f$ in $A(D)$ such that
$T_1-e+f$ is a time-respecting $r_2$-arborescence. Let $T_2:=T_1-e+f$. Since $T_1$ is an $r_1$-arborescence and $T_2$ is an $r_2$-arborescence,
$\head(e)=r_2$ and $\head(f)=r_1$.
Then, $T_1+f$ contains a directed cycle $C$, which contains $e$. 
(Otherwise, $T_2$ contains a directed cycle, which contradicts the fact that $T_2$ is an arborescence.)
Let $\ell$ denote the length of $C$. 
Suppose that $C$ traverses arcs $e_1, e_2, \dots , e_\ell$ in this order when starting from $r_1$, that is, 
$e_\ell=f$ and $\head(e_i)=\tail(e_{i+1})$ for each $i$, where $e_{\ell+1}=e_1$.
We can easily see that (i) holds if $f=(r_2,r_1)$ and (ii) holds if $e=(r_1,r_2)$.

Hence it suffices to consider the case when $f \neq (r_2,r_1)$ and $e \neq (r_1,r_2)$.
Let $j$ be the index such that $e_j = e$. 
Note that $j \neq 1$ by $e \neq (r_1,r_2)$, $j \neq \ell-1$ by $f \neq (r_2,r_1)$, and $j \neq \ell$ by $e \neq f$. 
Since $T_1$ is a time-respecting $r_1$-arborescence, we obtain the following inequalities.
\begin{align}
   \lambda(e_1) \leq \lambda(e_2)\leq \cdots \leq\lambda(e_{j+1})\leq \cdots \leq \lambda(e_{\ell-1}) \label{eq1}
\end{align}
Since $T_2$ is a time-respecting $r_2$-arborescence, we obtain the following inequalities.
\begin{align}
   \lambda(e_{j+1}) \leq \lambda(e_{j+2})\leq \cdots \leq\lambda(e_\ell)\leq \lambda(e_1) \label{eq2} 
\end{align}
By (\ref{eq1}) and (\ref{eq2}), we obtain $\lambda(e_1)=\lambda(e_2)=\cdots=\lambda(e_\ell)$.
Let $D^{\prime}$ be the digraph obtained from $D$ by contracting $C$ into a vertex $r_C$.
Then the digraph induced by the arcs in $A(T_1) \setminus A(C)$ is a time-respecting $r_C$-arborescence in $D^{\prime}$.
Hence (iii) holds.

\medskip
\noindent 
\textbf{[Sufficiency (``if'' part)]}
Suppose that (i) holds. Let $e$ be the arc in $T_1$ such that $\head(e)=r_2$.
Then $T_1-e+f$ is a time-respecting $r_2$-arborescence in $D$.
If (ii) holds, then we obtain a time-respecting $r_1$-arborescence in $D$ from $T_2$ by one step by a similar argument above.

Suppose that (iii) holds. 
Let $e$ and $f$ be the arcs in $C$ such that $\head(e)=r_2$ and $\head(f)=r_1$, respectively. 
Let $D^{\prime}$ be the digraph obtained from $D$ by contracting $C$ into $r_C$, and 
let $T_C$ be a time-respecting $r_C$-arborescence in $D^{\prime}$.
Then the digraph $T_1$ induced by the arcs in $A(T_C) \cup A(C)\setminus \{f\}$ is a time-respecting $r_1$-arborescence in $D$ and
the digraph $T_2$ induced by the arcs in $A(T) \cup A(C)\setminus \{e\}$ is a time-respecting $r_2$-arborescence in $D$.
Since $T_2=T_1-e+f$, $v_{r_1}$ and $v_{r_2}$ are adjacent in $\mathcal{G}^{\prime}(D)$.

This completes the proof of Lemma~\ref{thm:r_arborescence}.
\end{proof}

Let $r_1$ and $r_2$ be two distinct vertices in $D$. 
To check condition (i) in Lemma~\ref{thm:r_arborescence}, it suffices to give a polynomial-time algorithm for finding a time-respecting $r_1$-arborescence $T_1$ such that $\lambda(e') \ge \lambda(f)$ for each $e' \in \delta^+_{T_1}(r_1) \setminus \delta^-_{T_1}(r_2)$, where $f = (r_2, r_1)$.
This can be done in polynomial time
by removing all the arcs $e \in \delta^+_{D}(r_1) \setminus \delta^-_D(r_2)$ with $\lambda(e) < \lambda(f)$ from $D$ and 
by applying Corollary~\ref{cor:minimal} to find 
a time-respecting $r_1$-arborescence in the obtained digraph. 
Similarly, condition (ii) in Lemma~\ref{thm:r_arborescence} can be checked in polynomial time.

We consider to check condition (iii) in Lemma~\ref{thm:r_arborescence}.
However, it is NP-hard to find a directed cycle in a digraph containing two specified vertices~\cite{FORTUNE1980111}. 
To overcome this difficulty, we consider a supergraph of $\mathcal{G}^{\prime}(D)$, 
which is a key ingredient in our algorithm. 

For $t \in \mathbb{R}_+$, let $D_t$ denote the subgraph of $D$ induced by the edges of label $t$. 
We consider the following condition instead of (iii): 
\begin{enumerate}
    \item[\upshape(iii)'] 
    for some $t \in \mathbb{R}_+$, 
    $r_1$ and $r_2$ are contained in the same strongly connected component in $D_t$, which is $t$-labeled extendible. 
\end{enumerate}
We can see that (iii)' is a relaxation of (iii) as follows. 
If $r_1$ and $r_2$ are contained in a $t$-labeled extendible directed cycle $C$, then 
they are contained in the same connected component $H$ of $D_t$. 
Furthermore, since $C$ is $t$-labeled extendible, so is $H$, which means that (iii)' holds. 

Define $\hat{\mathcal{G}}(D)$ as the graph whose vertex set is the same as $\mathcal{G}^{\prime}(D)$, and 
$v_{r_1}$ and $v_{r_2}$ are adjacent in $\mathcal{G}^{\prime}(D)$ 
if and only if $r_1$ and $r_2$ satisfy (i), (ii), or (iii)'. 
Since (iii)' is a relaxation of (iii), Lemma~\ref{thm:r_arborescence} shows that 
$\hat{\mathcal{G}}(D)$ is a supergraph of $\mathcal{G}^{\prime}(D)$. 
We now show the following lemma. 

\begin{lemma}\label{relax01}
    Let $r_1$ and $r_2$ be distinct vertices in $D$ that satisfy condition (iii)'. 
    Then, $\mathcal{G}^{\prime}(D)$ contains a path between $v_{r_1}$ and $v_{r_2}$. 
\end{lemma}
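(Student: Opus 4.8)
The plan is to reduce the statement to a purely local adjacency claim about the strongly connected component and then to chain these local adjacencies together. Let $S$ denote the strongly connected component of $D_t$ that contains $r_1$ and $r_2$, which is $t$-labeled extendible by hypothesis. Since $S$ is strongly connected, its underlying undirected graph is connected, so I can fix a sequence of vertices $r_1 = u_0, u_1, \dots, u_m = r_2$ of $S$ in which every consecutive pair $u_i, u_{i+1}$ is joined by an arc of $S$ (in one of the two orientations). The whole lemma then follows once I show that, for each such consecutive pair, $v_{u_i}$ and $v_{u_{i+1}}$ are adjacent in $\mathcal{G}^{\prime}(D)$: concatenating these edges produces the desired path from $v_{r_1}$ to $v_{r_2}$. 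Here each $v_{u_i}$ is genuinely a vertex of $\mathcal{G}^{\prime}(D)$, because a cycle through $u_i$ sitting inside a $t$-labeled extendible subgraph yields a time-respecting $u_i$-arborescence, by the remark following the definition of extendibility.

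To establish one such adjacency, consider an arc of $S$ between $u := u_i$ and $w := u_{i+1}$; say it is $(u,w)$, the reversed orientation being symmetric. Because $S$ is strongly connected, there is a simple directed path $Q$ from $w$ to $u$ inside $S$, and $C := Q + (u,w)$ is a simple directed cycle of $S$ through both $u$ and $w$, all of whose arcs carry label $t$. I will verify that $C$ is $t$-labeled extendible; granting this, condition~(iii) of Lemma~\ref{thm:r_arborescence} holds for $u$ and $w$, so $v_u$ and $v_w$ are adjacent in $\mathcal{G}^{\prime}(D)$, exactly as needed.

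The main obstacle is precisely this verification, that is, transferring the extendibility of the whole component $S$ down to the subcycle $C$. The first condition of extendibility is immediate, since $A(C) \subseteq A(S) \subseteq A(D_t)$. For the second condition, I build a time-respecting $r_C$-arborescence with all labels at least $t$ in the digraph obtained from $D$ by contracting $C$. First, contracting $C$ inside $S$ leaves $S$ strongly connected (contraction preserves strong connectivity), so the image of $S$ admits a spanning out-arborescence rooted at $r_C$ that uses only label-$t$ arcs; this reaches every vertex of $S$. Second, I take the time-respecting arborescence $T$ with labels at least $t$ witnessing the extendibility of $S$ in the fully $S$-contracted digraph, and reattach its arcs in the $C$-contracted digraph: each arc of $T$ leaving the contracted copy of $S$ in fact leaves a concrete vertex of $S$, already reachable from $r_C$ by the first part, while every other arc of $T$ lies entirely outside $S$ and is unchanged. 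Since $T$ sends no arc into $S$, the union of the two arc sets is acyclic and forms an $r_C$-arborescence; moreover any root-to-vertex path first traverses label-$t$ arcs and then a suffix of a path of $T$ whose labels are nondecreasing and at least $t$, so the path is time-respecting. This produces the required arborescence and certifies that $C$ is $t$-labeled extendible, completing the argument. The delicate point to get right is checking that the reattached structure is acyclic and time-respecting across the junction between the label-$t$ part and the tail-$T$ part, which is where I expect the bulk of the careful writing to go.
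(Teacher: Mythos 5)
Your proof is correct and follows essentially the same route as the paper's: walk from $r_1$ to $r_2$ along a path inside the strongly connected component, and for each consecutive pair of vertices invoke condition (iii) of Lemma~\ref{thm:r_arborescence} via a directed cycle through the connecting arc. The only difference is that you spell out in detail the verification that such a cycle inherits $t$-labeled extendibility from the component (grafting a label-$t$ arborescence of the contracted image of $S$ onto the witnessing arborescence $T$, which sends no arc back into $S$), a step the paper asserts in a single sentence; your verification of that step is sound.
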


\begin{proof}
    Let $H$ be the $t$-labeled extendible strongly connected component in $D_t$ that contains $r_1$ and $r_2$, where $t \in \mathbb{R}_+$.
    Since $H$ is strongly connected, it contains a directed path from $r_1$ to $r_2$ that traverses vertices $p_0, p_1, \dots , p_k$ in this order, where $p_0 = r_1$ and $p_k = r_2$. Then, for $0 \le i \le k-1$, $H$ has a directed cycle $C_i$ containing arc $(p_i, p_{i+1})$ as $H$ is strongly connected.
    Since $H$ is $t$-labeled extendible and strongly connected, we see that $C_i$ is also $t$-labeled extendible. 
    Therefore, condition (iii) in Lemma~\ref{thm:r_arborescence} shows that 
    $v_{p_i}$ and $v_{p_{i+1}}$ are adjacent in $\mathcal{G}^{\prime}(D)$, 
    which implies that $\mathcal{G}^{\prime}(D)$ contains a path connecting $v_{r_1}=v_{p_0}$ and $v_{r_2}=v_{p_k}$.
\end{proof}

By this lemma and by the fact that $\hat{\mathcal{G}}(D)$ is a supergraph of $\mathcal{G}^{\prime}(D)$, 
we obtain the following lemma. 

\begin{lemma}\label{relax02}
    For any distinct vertices $r_1$ and $r_2$ in $D$, 
    $\hat{\mathcal{G}}(D)$ has a $v_{r_1}$-$v_{r_2}$ path if and only if $\mathcal{G}^{\prime}(D)$ has one. 
\end{lemma}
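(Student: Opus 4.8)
The plan is to prove the two directions separately, observing that the forward (``if'') direction is immediate while the reverse (``only if'') direction carries the content. Since $\hat{\mathcal{G}}(D)$ is a supergraph of $\mathcal{G}^{\prime}(D)$ on the same vertex set, any $v_{r_1}$-$v_{r_2}$ path in $\mathcal{G}^{\prime}(D)$ is verbatim a path in $\hat{\mathcal{G}}(D)$, which settles the ``if'' direction with no further argument.

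For the ``only if'' direction, I would start from a $v_{r_1}$-$v_{r_2}$ path $v_{r_1} = v_{s_0}, v_{s_1}, \dots, v_{s_m} = v_{r_2}$ in $\hat{\mathcal{G}}(D)$ and argue edge by edge. By the definition of $\hat{\mathcal{G}}(D)$, each edge $v_{s_i} v_{s_{i+1}}$ is witnessed by one of the conditions (i), (ii), or (iii)'. The goal is to replace each such edge by a $v_{s_i}$-$v_{s_{i+1}}$ walk lying entirely in $\mathcal{G}^{\prime}(D)$; concatenating these walks yields a $v_{r_1}$-$v_{r_2}$ walk in $\mathcal{G}^{\prime}(D)$, and any walk contains a path with the same endpoints.

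The per-edge case analysis is the heart of the argument. If the edge is witnessed by (i) or (ii), then by Lemma~\ref{thm:r_arborescence} the pair $(s_i, s_{i+1})$ already satisfies one of the defining adjacency conditions of $\mathcal{G}^{\prime}(D)$, so $v_{s_i}$ and $v_{s_{i+1}}$ are in fact adjacent in $\mathcal{G}^{\prime}(D)$ and the edge may be kept unchanged. If instead the edge is witnessed by (iii)', then $s_i$ and $s_{i+1}$ satisfy condition (iii)', and Lemma~\ref{relax01} directly supplies a $v_{s_i}$-$v_{s_{i+1}}$ path inside $\mathcal{G}^{\prime}(D)$, which we splice in place of the edge.

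Since the two lemmas invoked are precisely tailored to the two cases, I do not expect a genuine obstacle: the only point requiring care is the bookkeeping that conditions (i) and (ii) are exactly the defining adjacency conditions of $\mathcal{G}^{\prime}(D)$ via Lemma~\ref{thm:r_arborescence}, so that a type-(i) or type-(ii) edge of $\hat{\mathcal{G}}(D)$ need not be rerouted at all, whereas only the type-(iii)' edges must be expanded into paths using Lemma~\ref{relax01}. Assembling the resulting walk and extracting a path then completes the proof.
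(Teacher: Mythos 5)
Your proof is correct and matches the paper's intended argument: the paper derives this lemma directly from Lemma~\ref{relax01} together with the fact that $\hat{\mathcal{G}}(D)$ is a supergraph of $\mathcal{G}^{\prime}(D)$, leaving implicit exactly the edge-by-edge replacement (keep type-(i)/(ii) edges via Lemma~\ref{thm:r_arborescence}, expand type-(iii)' edges via Lemma~\ref{relax01}, then extract a path from the resulting walk) that you spell out.
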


This lemma shows that it suffices to check the reachability in $\hat{\mathcal{G}}(D)$
to solve \textsc{Time-respecting Arborescence Reconfiguration}. 
For distinct vertices $r_1$ and $r_2$ in $D$, (i) and (ii) can be checked in polynomial time as described above. 
We can check condition (iii)' by applying the following algorithm 
for each $t \in \{\lambda(e) \mid e \in A(D)\}$. 

\begin{enumerate}
    \item Construct the subgraph $D_t$ of $D$ induced by the edges of label $t$.
    \item If $r_1$ and $r_2$ are contained in the same strongly connected component $H$ in $D_t$, then go to Step 3. 
    Otherwise, (iii)' does not hold for the current $t$. 
    \item Contract $H$ into a vertex $r_H$ to obtain a digraph $D'$. 
    Remove all the arcs $e \in A(D')$ with $\lambda(e) < t$ from $D'$ and find a time-respecting $r_H$-arborescence in this digraph by Corollary~\ref{cor:minimal}. 
    \item If a time-respecting $r_H$-arborescence is found, then $r_1$ and $r_2$ satisfy condition (iii)'. 
    Otherwise, (iii)' does not hold for the current $t$.
\end{enumerate}

Since we can decompose a digraph into strongly connected components in polynomial time, 
this algorithm runs in polynomial time. 
Therefore, by checking (i), (ii), and (iii)' for every pair of $r_1$ and $r_2$, 
we can construct $\hat{\mathcal{G}}(D)$ in polynomial time. 

By Theorem~\ref{thm:identical} and Lemma~\ref{relax02}, 
a time-respecting $r_1$-arborescence $T_1$ can be reconfigured to 
a time-respecting $r_2$-arborescence $T_2$ 
if and only if $\hat{\mathcal{G}}(D)$ contains a $v_{r_1}$-$v_{r_2}$ path, 
which can be checked in polynomial time. 
This completes the proof of Theorem~\ref{thm:non-identical}. 

\section{NP-completeness of shortest reconfiguration}

For two time-respecting arborescences $T_1$ and $T_2$ in a digraph $D=(V, A)$ with $\lambda\colon A \to \mathbb R_{+}$ and for a positive integer $\ell$, we consider the problem of determining whether 
there exists a reconfiguration sequence from $T_1$ to $T_2$ of length at most $\ell$, 
which we call \textsc{Time-respecting Arborescence Shortest Reconfiguration}.  
Note that the length is defined as the number of swap operations, which is equal to the number of time-respecting arborescences in the sequence minus one. 
In this section, we prove the NP-completeness of this problem. 

\begin{theorem} \label{thm:shortest}
\label{thm:shortesthardness}
\textsc{Time-respecting Arborescence Shortest Reconfiguration}  is NP-complete.
\end{theorem}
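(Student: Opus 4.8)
The plan is to prove NP-completeness in two parts: membership in NP and NP-hardness. Membership is routine. By Theorem~\ref{thm:non-identical} a yes-instance has a reconfiguration sequence, and a shortest one can be assumed to have length polynomially bounded in $|V|$ (since by Theorem~\ref{thm:identical} each ``root-changing'' step in $\hat{\mathcal{G}}(D)$ is interleaved with at most $|V|$ ordinary single-root exchanges, and the path in $\hat{\mathcal{G}}(D)$ visits each root vertex at most once along a shortest route); hence a sequence of length at most $\ell$ is a polynomial-size certificate whose validity is checkable in polynomial time. So the crux is NP-hardness.

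For the hardness reduction I would reduce from a problem whose structure naturally encodes a cost that must be minimized over a set of choices. A natural candidate is \textsc{Vertex Cover} or \textsc{Hamiltonian Path}/\textsc{TSP}-flavored source, but given the geometry of time-respecting arborescences the most promising route is to exploit condition~(iii) of Lemma~\ref{thm:r_arborescence}: moving the root along a $t$-labeled extendible cycle lets the root traverse a directed cycle, and the \emph{length} of the reconfiguration sequence charged for such a move is governed by how many arcs of the current arborescence must be rebuilt after the root shifts. First I would design a gadget digraph $D$ in which the root must be moved from $r_1$ to $r_2$, and every admissible way of doing so corresponds to selecting a subset (or an ordering) of gadget components; the number of single-arc exchanges then equals a linear function of the size of the chosen set plus a fixed overhead. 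Thus asking whether a reconfiguration of length at most $\ell$ exists becomes equivalent to asking whether the combinatorial selection problem has a solution of cost at most some threshold.

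Concretely, I would build $D$ so that $T_1$ and $T_2$ differ in their roots and so that any reconfiguration sequence is forced, by the label function $\lambda$, to pass through a prescribed sequence of intermediate roots (the vertices $v_r$ of $\hat{\mathcal{G}}(D)$). Each root transition forces rerouting of a controlled number of arcs, and I would attach ``penalty gadgets'' whose arcs must be swapped in and then swapped back whenever a certain choice is made, so that the total length equals (base cost) $+\,c\cdot(\text{number of selected items})$ for a constant $c$. Choosing the item-selection instance to be, for example, the decision version of an NP-hard optimization problem such as \textsc{Vertex Cover} on the gadget graph, the length bound $\ell$ translates directly into the covering bound. The correctness proof then has two directions: from a small cover I construct a short sequence explicitly using Theorem~\ref{thm:identical} within each fixed-root phase and Lemma~\ref{thm:r_arborescence} for each root change; conversely, from a short sequence I extract a small cover by reading off which penalty gadgets were activated.

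The main obstacle I anticipate is \emph{enforcing} that the reconfiguration sequence is forced to change roots in a controlled, gadget-respecting way and that no ``shortcut'' sequence can beat the intended cost. Because Theorem~\ref{thm:identical} makes all single-root reconfigurations cheap and flexible, the adversary has a lot of freedom within each fixed-root phase, so the hardness must come entirely from the root-change structure encoded by Lemma~\ref{thm:r_arborescence}; the delicate part is to calibrate $\lambda$ so that (a) the only way to migrate the root is through the intended cycles (ruling out conditions~(i) and~(ii) as cheap alternatives except where we want them), and (b) the arc-exchange count of each phase is rigidly determined by the symmetric-difference formula $|A(T_1)\setminus A(T_2)|$ from Theorem~\ref{thm:identical}, leaving no slack the adversary could exploit. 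Getting this rigidity right — so that the lower bound on length matches the intended cost exactly — is where the bulk of the careful work will lie.
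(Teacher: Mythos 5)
Your proposal identifies the right source problem (\textsc{Vertex Cover}) and the right high-level strategy --- force the root to migrate from $r_1$ to $r_2$ and make the sequence length equal a fixed overhead plus a linear function of the size of a selected vertex set --- and this matches the paper's reduction in spirit. Your NP-membership argument is also essentially the paper's (polynomially bounded sequences exist by the proof of Theorem~\ref{thm:non-identical}). However, there is a genuine gap: the entire content of an NP-hardness proof is the concrete gadget together with the two directions of correctness, and your proposal explicitly defers exactly that (``where the bulk of the careful work will lie''). No construction, no label function $\lambda$, and no lower-bound argument for the ``if'' direction are given, so what remains is a research plan rather than a proof.

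Moreover, the mechanism you anticipate is materially different from the one that works, and harder to realize. You envision the root passing ``through a prescribed sequence of intermediate roots'' via condition~(iii) cycles, with ``penalty gadgets whose arcs must be swapped in and then swapped back.'' The paper's construction needs none of this: there is a \emph{single} root transition, across the $2$-cycle $\{(r_1,r_2),(r_2,r_1)\}$ of label~$2$, and the cost accounting comes from a different idea that is absent from your proposal --- the time labels force the arborescence \emph{at the moment of the root swap} into an expensive configuration. Concretely, since the swap arcs have label~$2$ and the direct arcs $(r_1,w_e),(r_2,w_e)$ have label~$1$, the common arc set $F$ shared by the two consecutive arborescences $T'_1 = F \cup \{(r_1,r_2)\}$ and $T'_2 = F \cup \{(r_2,r_1)\}$ can contain no such direct arc; hence every edge-vertex $w_e$ must be entered from some vertex-gadget $w_v$ (label~$4$), which in turn must be entered by the label-$3$ arc $a_v$ rather than the label-$5$ arc $a'_v$ used in $T_1$ and $T_2$. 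The set $X=\{v \mid a_v \in F\}$ is then a vertex cover, and the symmetric differences give length at least $2|E|+2|X|+1$, matching the threshold $\ell = 2|E|+2k+1$. Without this labeling trick (or an equivalent one) your plan's central worry --- that the flexibility granted by Theorem~\ref{thm:identical} within each fixed-root phase leaves ``slack the adversary could exploit'' --- remains unresolved, so the proposal cannot be credited as a proof of the theorem.
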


\begin{proof}
The proof of Theorem~\ref{thm:non-identical} shows that  
if $T_1$ is reconfigurable to $T_2$, then there exists a reconfiguration sequence whose length is bounded by a polynomial in $|V|$. 
This implies that \textsc{Time-respecting Arborescence Shortest Reconfiguration} is in NP. 

To show the NP-hardness, we reduce \textsc{Vertex Cover} to \textsc{Time-respecting Arborescence Shortest Reconfiguration}. 
Recall that, in \textsc{Vertex Cover}, we are given a graph $G=(V, E)$ and a positive integer $k$, 
and the task is to determine whether $G$ contains a vertex cover of size at most $k$ or not. 

Suppose that $G=(V, E)$ and $k$ form an instance of \textsc{Vertex Cover}.
We construct a digraph $D=(W, A)$ with multiple arcs as follows: 
\begin{align*}
W &= \{r_1, r_2\} \cup \{w_v \mid v \in V\} \cup \{w_e \mid e \in E\}, \\
A_1 &= \{(r_1, w_e), (r_2, w_e) \mid e \in E\}, \ \ 
A_2 =\{(r_1, r_2), (r_2, r_1)\}, \\
A_3 &= \{a_v = (r_2, w_v) \mid v \in V\}, \ \
A_4 = \{ (w_v, w_e) \mid e \in \delta_G(v) \}, \\
A_5 &= \{a'_v = (r_2, w_v) \mid v \in V\}, \ \
A = A_1 \cup A_2 \cup A_3 \cup A_4 \cup A_5. 
\end{align*}
Here, $w_v$ and $w_e$ are newly introduced vertices associated with $v \in V$ and $e \in E$, respectively. 
Note that $a_v \in A_3$ and $a'_v \in A_5$ are distinct, that is, they form multiple arcs. 
For $i \in \{1, 2, 3, 4, 5\}$ and for $a \in A_i$, we define $\lambda(a) = i$. 
Let $\ell = 2|E| + 2k + 1$. 
Let
\begin{align*}
T_1 &= \{(r_1, r_2)\} \cup \{(r_1, w_e) \mid e \in E\} \cup A_5, \\ 
T_2 &= \{(r_2, r_1) \} \cup \{(r_2, w_e) \mid e \in E\} \cup A_5. 
\end{align*}
See~\Cref{fig:reduction} for an illustration.
One can easily see that $T_1$ and $T_2$ are time-respecting arborescences in $D$. 
This completes the construction of an instance of \textsc{Time-respecting Arborescence Shortest Reconfiguration}. 

\begin{figure}[t]
    \centering
    \includegraphics[width=0.5\textwidth]{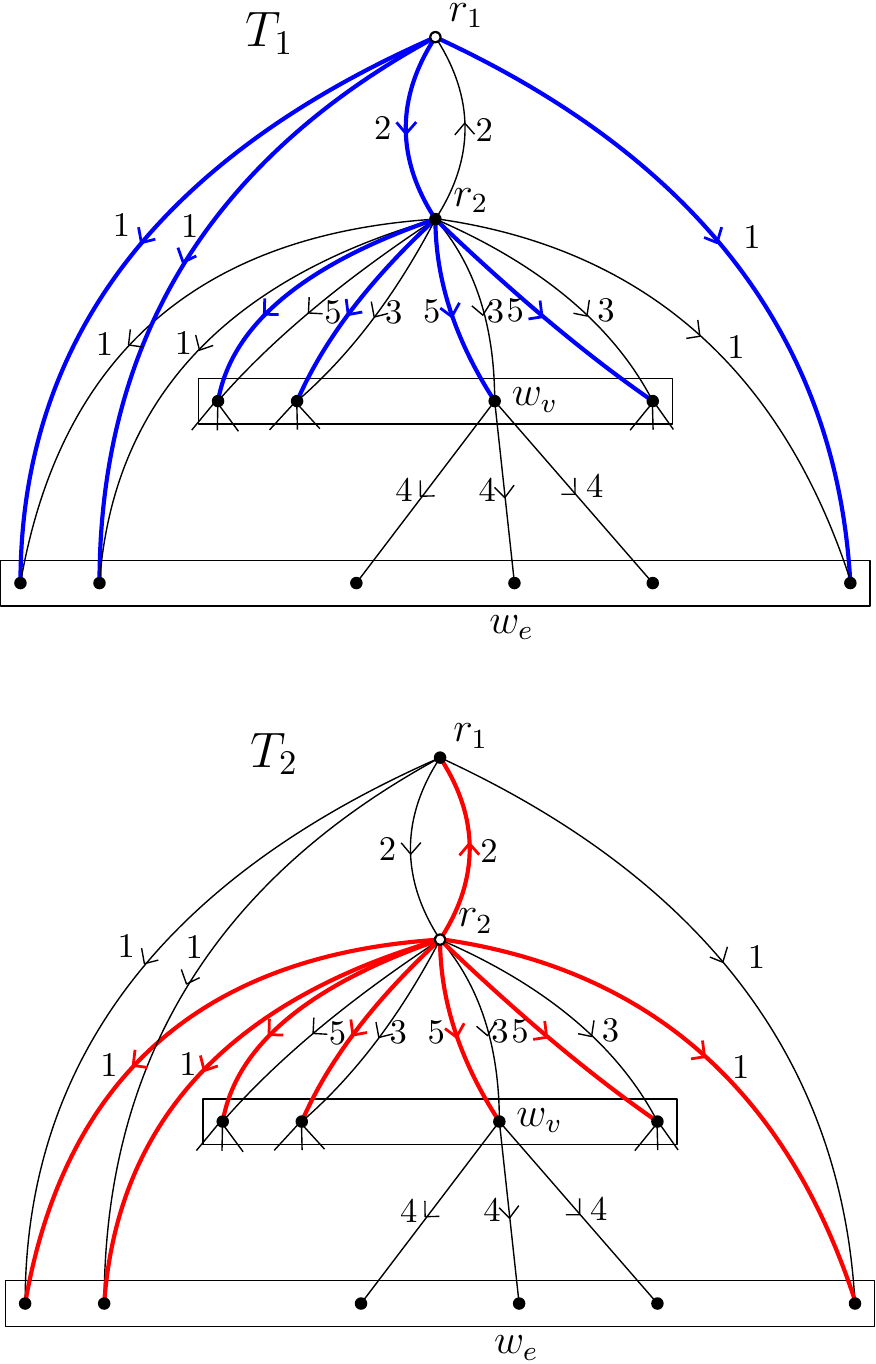}
    \caption{An illustration of the digraph $D$ constructed from $G$ and two time-respecting arborescences $T_1$ and $T_2$ in the proof of Theorem~\ref{thm:shortesthardness}.}
    \label{fig:reduction}
\end{figure}

To prove the validity of the reduction, it suffices to show that $G$ contains a vertex cover of size at most $k$ if and only if $D$ admits a reconfiguration sequence from $T_1$ to $T_2$ of length at most $\ell$. 

\medskip
\noindent 
\textbf{[Necessity (``only if'' part)]}
Suppose that $G$ contains a vertex cover $X \subseteq V$ with $|X| \le k$. 
For $e \in E$, let $\sigma(e)$ be an end vertex of $e$ that is contained in $X$. 
If both of the end vertices of $e$ are contained in $X$, then we choose one arbitrarily. 
Let
\begin{align*}
T'_1 &= \{(r_1, r_2)\} \cup \{a_v \mid v \in X\} \cup \{a'_v \mid v \in V \setminus X\} \cup \{(w_{\sigma(e)}, w_e) \mid e \in E \}, \\ 
T'_2 &= \{(r_2, r_1)\} \cup \{a_v \mid v \in X\} \cup \{a'_v \mid v \in V \setminus X\} \cup \{(w_{\sigma(e)}, w_e) \mid e \in E \}, 
\end{align*}
which are time-respecting arborescences in $D$. 
Then, $T_1$ can be transformed into $T'_1$ as follows: 
replace $a'_v$ with $a_v$ for each $v \in X$ and then
replace $(r_1, w_e)$ with $(w_{\sigma(e)}, w_e)$ for each $e \in E$. 
This shows that there exists a reconfiguration sequence from $T_1$ to $T'_1$ of length $|X| + |E|$. 
Similarly, $T'_2$ can be transformed into $T_2$ in $|X| + |E|$ steps. 
Since $T'_1$ and $T'_2$ are adjacent, by combining the above transformations, 
we obtain a reconfiguration sequence from $T_1$ to $T_2$ of length $2(|X| + |E|) +1 \le \ell$, 
which shows the necessity.

\medskip
\noindent 
\textbf{[Sufficiency (``if'' part)]}
Suppose that there is a reconfiguration sequence from $T_1$ to $T_2$ of length at most $\ell = 2 |E| + 2k +1$. 
Observe that, for any arborescence in $D$, its root is either $r_1$ or $r_2$, because $D$ contains no arc from $W \setminus \{r_1, r_2\}$ to $\{r_1, r_2\}$.  
This shows that 
the root has to move from $r_1$ to $r_2$ in the reconfiguration sequence, that is, 
the reconfiguration sequence contains two consecutive time-respecting arborescences $T'_1$ and $T'_2$ such that 
the root of $T'_i$ is $r_i$ for $i=1, 2$. 
Since $(r_1, r_2)$ is the unique arc entering $r_2$, $T'_1$ contains $(r_1, r_2)$. Similarly, $T'_2$ contains $(r_2, r_1)$. 
Therefore, $T'_2$ is obtained from $T'_1$ by removing $(r_1, r_2)$ and adding $(r_2, r_1)$, which means that 
$T'_1 = F \cup \{(r_1, r_2)\}$ and $T'_2 = F \cup \{(r_2, r_1)\}$ for some $F \subseteq A$. 
Since $T'_1$ and $T'_2$ are time-respecting and $\lambda((r_1, r_2)) = \lambda((r_2, r_1)) = 2$, we obtain $F \cap A_1 = \emptyset$. 
Hence, for each $e \in E$, $F$ contains a unique arc in $A_4$ entering $w_e$, 
that is, there exists $\sigma(e) \in V$ such that $(w_{\sigma(e)}, w_e) \in F$. 

Let $X = \{v \in V \mid a_v \in F\}$. We now show that $X$ is a vertex cover of size at most $k$. 
For each $e \in E$, since $\lambda((w_{\sigma(e)}, w_e)) < \lambda(a'_{\sigma(e)})$, $F$ does not contain $a'_{\sigma(e)}$. 
This shows $a_{\sigma(e)} \in F$, because $F$ contains exactly one arc entering $w_{\sigma(e)}$. 
Therefore, $\sigma(e) \in X$ for any $e \in E$, which implies that $X$ is a vertex cover in $G$. 

The length of the reconfiguration sequence from $T_1$ to $T'_1$ is at least $|T'_1 \setminus T_1| = |\{(w_{\sigma(e)}, w_e) \mid e \in E\} \cup  \{a_v \mid v \in X\}| = |E| + |X|$. 
Similarly, the length from $T'_2$ to $T_2$ is at least $|E| + |X|$. 
By considering a step from $T'_1$ to $T'_2$, the total length of the reconfiguration sequence from $T_1$ to $T_2$ is at least $2|E|+2|X|+1$. 
Since this length is at most $\ell$, we obtain $|X| \le k$. This shows that $G$ has a vertex cover of size at most $k$. 

\medskip

By the above argument, \textsc{Vertex Cover} is reduced to \textsc{Time-respecting Arborescence Shortest Reconfiguration}, 
and hence \textsc{Time-respecting Arborescence Shortest Reconfiguration} is NP-hard.
\end{proof}

\begin{remark}
\label{rem:numberoflabels}
The above proof works even if we perturb the value of $\lambda$, which shows that 
\textsc{Time-respecting Arborescence Shortest Reconfiguration} is NP-complete even when $\lambda(a) \neq \lambda(a')$ for distinct $a, a' \in A$. 
We also see that the above proof works even if we define 
$\lambda(a) = 1$ for $a \in A_1$, 
$\lambda(a) = 2$ for $a \in A_2 \cup A_3 \cup A_4$, and 
$\lambda(a) = 3$ for $a \in A_5$. 
This shows that 
\textsc{Time-respecting Arborescence Shortest Reconfiguration} is NP-complete even when $\lambda(a) \in \{1, 2, 3\}$ for $a \in A$. 
\end{remark}

\section{Concluding remarks}
As described in the introduction,  
if we remove the ``time-respecting'' constraint, then any arborescence $T_1$ can be transformed into another arborescence $T_2$ in $|T_1 \setminus T_2|$ steps; see~\cite{ItoIKNOW23}. 
This implies that \textsc{Time-respecting Arborescence Shortest Reconfiguration} 
can be solved in polynomial-time if $\lambda(a)$ takes the same value for any $a \in A$.
On the other hand, as described in Remark~\ref{rem:numberoflabels}, \textsc{Time-respecting Arborescence Shortest Reconfiguration} is NP-complete
even if $\lambda(a)$ takes one of the three given values for $a \in A$. 
It is open whether the shortest reconfiguration sequence can be found in polynomial time when $|\{\lambda(a) \mid a \in A\}| = 2$. 

\printbibliography

\end{document}